\newtheorem{lemma}{Lemma}
\newtheorem{theorem}{Theorem}
\newtheorem{corollary}{Corollary}
\newtheorem{definition}{Definition}
\newcommand{\qed}{\hfill\ensuremath{\Box}\medskip\\\noindent}
\newenvironment{proof}{\noindent\emph{Proof. }}{\qed}
\newcommand{\aps}{\ensuremath{\sqsubseteq}}
\newcommand{\rhop}{\ensuremath{\gamma}}
\newcommand{\aass}{arc-preserving split}
\newcommand{\lightdepth}{\ensuremath{\mathrm{lightdepth}}}
\newcommand{\spaces}{\ensuremath{\mathrm{spaces}}}
\newcommand{\size}{\ensuremath{\mathrm{size}}}
\newcommand{\rank}{\ensuremath{\textsc{rank}}}
\newcommand{\select}{\ensuremath{\textsc{select}}}
\title{Fast Arc-Annotated Subsequence Matching in Linear Space\thanks{An extended abstract of this paper appeared in proceedings of the 36th International Conference on Current Trends 
in Theory and Practice of Computer Science.}}
\author{Philip Bille\thanks{Supported by the Danish Agency for Science, Technology, and Innovation.} \\ \texttt{phbi@imm.dtu.dk}  \and Inge Li G{\o}rtz\thanks{Corresponding author. Address: Technical University of Denmark, Informatics and Mathematical Modelling, Building 322, Office 124, DK-2800 Kongens Lyngby, Denmark.
Phone: +45 45 25 36 73. Fax: +45 45 93 00 74} \\ \texttt{ilg@imm.dtu.dk}}
\date{}
\begin{document}
\maketitle

\begin{abstract}
An arc-annotated string is a string of characters, called bases, augmented with a set of pairs, called arcs, each connecting two bases.  Given arc-annotated strings $P$ and $Q$ the arc-preserving subsequence problem is to determine if $P$ can be obtained from $Q$ by deleting bases from $Q$. Whenever a base is deleted any arc with an endpoint in that base is also deleted. Arc-annotated strings where the arcs are ``nested'' are a natural model of RNA molecules that captures both the primary and secondary structure of these. The arc-preserving subsequence problem for nested arc-annotated strings is basic primitive for investigating the function of RNA molecules. Gramm et al. [ACM Trans. Algorithms 2006] gave an algorithm for this problem using $O(nm)$ time and space, where $m$ and $n$ are the lengths of $P$ and $Q$, respectively. In this paper we present a new algorithm using $O(nm)$ time and $O(n + m)$ space, thereby matching the previous time bound while significantly reducing the space from a quadratic term to linear. This is essential to process large RNA molecules where the space is likely to be a bottleneck. To obtain our result we introduce several novel ideas which may be of independent interest for related problems on arc-annotated strings.
\end{abstract}

\section{Introduction}
An \emph{arc-annotated string} $S$ is a string augmented with an \emph{arc set} $A_S$. Each character in $S$ is called a \emph{base} and the arc set $A_S$ is a set of pairs of positions in $S$ connecting two distinct bases. We say that $S$ is a \emph{nested arc-annotated string} if no two arcs in $A_S$ share an endpoint and no two arcs cross each other, i.e., for all $(i_l, i_r), (i_l', i_r') \in A_S$ we have that $i_l < i_l' < i_r$ iff $i_l < i_r' < i_r$. Given arc-annotated strings $P$ and $Q$ we say that $P$ is a \emph{arc-preserving subsequence} (APS) of $Q$, denoted $P \sqsubseteq Q$, if $P$ can be obtained from $Q$ by deleting $0$ or more bases from $Q$. Whenever a base is deleted any arc with an endpoint in that base is also deleted. The \emph{arc-preserving subsequence problem}  (APS) is to determine if $P \sqsubseteq Q$. If $P$ and $Q$ are both nested arc-annotated strings we refer to the problem as the \emph{nested arc-preserving subsequence problem} (NAPS). Fig.~\ref{fig:treestructure}(a) shows an example of nested arc-annotated strings. 

\begin{figure}[t] 
  \centering \includegraphics[scale=.5]{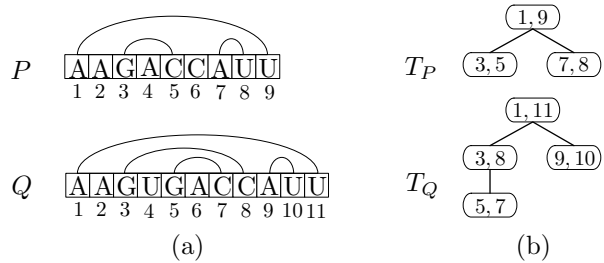}
  \caption{(a) Nested arc-annotated strings $P$ and $Q$. Here, $P$ and $Q$ contain arcs connecting their first and last bases. (b) The corresponding trees $T_P$ and $T_Q$ induced by the arcs.}
  \label{fig:treestructure}
\end{figure}

Ribonucleic acid (RNA) molecules are often modeled as nested arc-annotated strings. Here, the string consists of bases from the $4$-letter alphabet $\{\textrm{A}, \textrm{U}, \textrm{C}, \textrm{G}\}$, called the \emph{primary structure}, and an arc set consisting of pairings between bases, called the \emph{secondary structure}. The secondary structure of RNA is central for many biological functions and it is more preserved in evolution than the primary structure~\cite{JCW2000, TKFHVP2000, HMBP1996, FKPVH1999, BBFFFHH2007, BB2007}. NAPS is a simple primitive for comparing the secondary structure of the RNA molecules. Furthermore, it may also serve as the subroutine in algorithms for the more general \emph{longest common arc-preserving subsequence problem} (LAPCS), see Gramm et al.~\cite{GJN2006}.

Building on earlier work in a related model of RNA molecules by Vialette~\cite{Vialette2004}, Gramm  et al.~\cite{GJN2006} introduced and gave an algorithm for NAPS using $O(nm)$ time and space, where $m$ and $n$ are the lengths of $P$ and $Q$, respectively. Kida~\cite{Kida2006} presented an experimental study of this algorithm and Damaschke~\cite{Damaschke2006} considered a special restricted case of the problem.

\subsection{Results}
We assume a standard unit-cost RAM model with word size $\Theta(\log n)$ and a standard instruction set including arithmetic operations, bitwise boolean 
operations, and shifts. The space complexity is the number of words used by the algorithm. All of the previous results are in same model of computation. Throughout the paper $P$ and $Q$ are nested arc-annotated strings of lengths $m$ and $n$, respectively. In this paper we present a new algorithm with the following complexities.
\begin{theorem}\label{thm:main}
Given nested arc-annotated strings $P$ and $Q$ of lengths $m$ and $n$, respectively, we can solve the nested arc-preserving subsequence problem in time $O(nm)$ and space $O(n+m)$.
\end{theorem}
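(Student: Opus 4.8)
The plan is to recast \textsc{naps} as a labelled tree-inclusion problem, recover the quadratic-space dynamic program of Gramm et al.\ in this language, and then simulate that dynamic program in linear space. For the first step I would pass to the trees $T_P$ and $T_Q$ induced by the arcs (Fig.~\ref{fig:treestructure}(b)): every arc becomes an internal node carrying the ordered pair of characters at its two endpoints, every unpaired base becomes a leaf carrying its character, and an artificial common root is added. Since deleting a base of $Q$ that is an arc endpoint destroys the arc and lifts the bases it enclosed one level up, $P \sqsubseteq Q$ is equivalent to the existence of an injection $f$ from the nodes of $T_P$ to the nodes of $T_Q$ that preserves left-to-right order and the ancestor relation, maps the root to the root, maps each arc node to an arc node carrying the same pair of characters, and maps each leaf either to a leaf with the same character or to a single endpoint of an arc carrying the matching character on that side (the partner endpoint then being deleted). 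This is an arc-aware, doubly-labelled variant of ordered tree inclusion.

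Next I would set up the natural dynamic program indexed by pairs $(x,y)$ with $x$ a node of $T_P$ and $y$ a node of $T_Q$, where $D[x][y]$ stores a ``leftmost deepest occurrence'': the leftmost, smallest location in $T_Q$ strictly below $y$ into which the forest formed by $x$ together with its right siblings --- carrying the obligations inherited from $x$'s ancestors --- can be embedded. Traversing $T_Q$ bottom up and, inside each node, its children from right to left, each entry is obtained in amortised $O(1)$ time from entries already computed, which reproduces the $O(nm)$ time and $O(nm)$ space of Gramm et al.

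The space reduction is the heart of the argument. I would fix a heavy-path decomposition of $T_Q$ and drive the traversal by it: a heavy edge simply extends the heavy path currently under consideration, whereas a light edge triggers a recursive call. Along a single heavy path the dynamic program collapses to one left-to-right sweep that keeps only a constant number of rows at a time, exactly as for a sequence dynamic program, so the active heavy path costs $O(m)$ working space. The light subtrees hanging off the path are handled by recursion, and since any root-to-leaf path of $T_Q$ contains $\lightdepth(T_Q)=O(\log n)$ light edges, the recursion stack has depth $O(\log n)$. To stop the stack from costing $O(m\log n)$, the central structural claim is that the row of $D$ that must be retained for a suspended light subtree is monotone, hence compressible to $O(\min(\size,m))$ breakpoints --- with $\size$ the size of that light subtree, since at most $\size$ nodes of $T_P$ can be relevant inside it --- and storable as a bitvector with $\rank$/$\select$ support so that individual entries are still read in $O(1)$ time. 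A light child at most halves the subtree size, so the subtree sizes of the suspended frames on the stack decrease geometrically and their compressed rows sum to $O(n)$; together with $O(n+m)$ of additional working space this yields the claimed $O(n+m)$ space, while the running time stays $O(nm)$.

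The step I expect to be the main obstacle is exactly this last structural claim: proving that the dynamic-programming rows are genuinely monotone (so that they admit the $O(\size)$-breakpoint, $\rank$/$\select$-indexable encoding), and showing that the values returned by a recursively solved light subtree can be folded back into the heavy-path sweep entirely within the compressed representation without exceeding $O(1)$ amortised time per cell. A secondary difficulty is that $T_P$ and $T_Q$ are not interchangeable in the recurrence, so the heavy-path argument has to be applied to the correct tree --- and possibly carried out on both sides --- and the doubly-labelled arc nodes together with the flexible leaf-to-endpoint matching must be threaded through every stage without damaging the bounds.
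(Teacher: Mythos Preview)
Your high-level architecture --- recast as a tree problem, define a row-based dynamic program, drive it by a heavy-path decomposition of $T_Q$, and compress the rows stored at suspended recursive frames --- is exactly what the paper does, and you correctly anticipate that the rows are monotone. The gap is in the compression bound you rely on.

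You claim that the row suspended at a light subtree of size $s$ has $O(\min(s,m))$ breakpoints, and then use the geometric decrease of $s$ along the recursion stack to sum to $O(n)$. This structural claim is false: the number of breakpoints in a row is governed by $m$, not by the length of the $Q$-interval involved. Concretely, take $Q[i_1,i_2]$ to be a single base ``$a$'' and $P = a^{2t}$ with the fully nested arc set $\{(1,2t),(2,2t-1),\ldots,(t,t+1)\}$. For the paper's row $\Gamma(i_1,i_2) = (\gamma(m,m,i_1,i_2),\ldots,\gamma(1,m,i_1,i_2))$ one gets, working from the right, $\gamma_j = j$ for $j>t$ and $\gamma_j = j-1$ for $j\le t$, so the sequence is $2t,2t-1,\ldots,t+1,t-1,\ldots,1,0$: it changes at every position, i.e.\ $\Theta(m)$ breakpoints, while $s=1$. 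Hence a breakpoint encoding can still cost $\Theta(m)$ words per suspended frame, and with $\Theta(\log n)$ frames you are back to $\Theta(m\log n)$ words. Your telescoping argument never gets off the ground.

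The paper closes the gap differently: it keeps the same $O(\log n)$ suspended rows, each of length $m$, but encodes each one in $O(m)$ \emph{bits} rather than $O(m)$ words. The monotone sequence $\gamma_m\ge\gamma_{m-1}\ge\cdots\ge\gamma_1$ has total drop at most $m$, so its first differences can be written in unary with $O(m)$ ones and $O(m)$ separators; a second bit string marks the separator positions, and $\rank$/$\select$ indices give $O(1)$-time random access. The total for all suspended frames is then $O(m\log|A_Q|)$ bits, which is $O(m)$ words on a $\Theta(\log n)$-bit word RAM. In short, the compressibility comes from the \emph{range} of the monotone sequence (at most $m$), not from the size of the $Q$-subtree, and the saving is a word-size factor per row rather than a geometric sum across rows.
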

Hence, we match the running time of the currently fastest known algorithm and at the same time we improve the space from $O(nm)$ to $O(n + m)$. This space improvement is critical for processing large RNA molecules. In particular, an algorithm using $O(nm)$ space quickly becomes infeasible, even for moderate sizes of RNA molecules, due to costly accesses to external memory.  An algorithm using $O(m+ n)$ space is much more scalable and allows us to handle significantly larger RNA molecules. Furthermore, we note that 
obtaining an algorithm using $O(nm)$ time and $o(nm)$ space is mentioned as an open problem in Gramm~et~al.~\cite{GJN2006}.

Compared to the previous work by Gramm et al.~\cite{GJN2006} our algorithm is not only more space-efficient but also simpler. Our algorithm is based on a single unified dynamic programming recurrence, whereas the algorithm by Gramm et al. requires computing and tabulating auxiliary information in multiple phases mixed with dynamic programming. Our approach allows us to better expose the features of NAPS and is essential for obtaining a linear space algorithm.

\subsection{Techniques}
As mentioned above, our algorithm is based on a new dynamic programming recurrence. Essentially, the recursion expresses for any pair of substrings $P'$ and $Q'$ of $P$ and $Q$, respectively, the longest prefix of $P'$ which is an arc-preserving subsequence of $Q'$ in term of smaller substrings of $P'$ and $Q'$. We combine several new ideas with  well-known techniques to convert our recurrence into an efficient algorithm.

First, we organize the dynamic programming recurrence into \emph{$\Gamma$ sequences}. A $\Gamma$ sequence for a given substring $Q'$ of $Q$ is a simple $O(m)$ space representation of the longest arc-preserving subsequences of each prefix of $P$ in $Q'$. We show how to efficiently manipulate $\Gamma$ sequences to get new $\Gamma$ sequences using a small set of simple operations, called the \emph{primitive operations}. Secondly, we organize the computation of $\Gamma$ sequences using a recursive algorithm that traverses the tree structure of the arcs in $Q$. The algorithm computes the $\Gamma$ sequence for each arc in $Q$ using the primitive operations. To avoid storing too many $\Gamma$ sequences during the traversal we direct the computation according to the well-known \emph{heavy-path decomposition} of the tree. This leads to an algorithm that stores at most $O(\log |A_Q|)$ $\Gamma$ sequences. Since each $\Gamma$ sequence uses $O(m)$ space the total space becomes $O(m \log |A_Q| + n)$.

Finally, to achieve linear space we exploit a structural property of $\Gamma$ sequences to compress them efficiently. We obtain a new representation of $\Gamma$ sequences that only requires $O(m)$ \emph{bits}. Plugging in the new representation into our algorithm the total space becomes $O(n + m)$ as desired. However, the resulting algorithm requires many costly compressions and decompressions of $\Gamma$ sequences at each arc in the traversal. As a practical and more elegant solution we show how to augment the compressed representation of $\Gamma$ sequences using standard \emph{rank/select indices} to obtain constant time random access to elements in $\Gamma$ sequences. This allows us to compress each $\Gamma$ sequence only once and avoid decompression entirely without affecting the complexity of the algorithm.

\subsection{Related Work}
Arc-annotated strings are a natural model of RNA molecules that captures both the primary and secondary structure of these. Consequently, a wide range of pattern matching problems for them have been studied,  see e.g.,~\cite{Evans1999, BMR1995, GCJW2002, AGGN2004, BT2006, GJN2006, BLMTW2009}. Among these, NAPS is one of the most basic problems.

The NAPS problem generalizes the \emph{tree inclusion problem} for ordered trees~\cite{Chen1998, KM1995, BG2005}. Here, the goal is to determine if a tree can be obtained from another tree by deleting nodes. This is equivalent to NAPS where all bases in both strings have an incident arc. The authors have shown how to solve the tree inclusion problem in time $O(nm/\log n + n\log n)$ and space $O(n+m)$~\cite{BG2005}. Compared to our current result for NAPS the space complexity is the same but the time complexity for tree inclusion is a factor $O(\log n)$ better for most values of $m$ and $n$. Though our obtained complexities for the tree inclusion problem and NAPS are very similar, the ideas and techniques behind the results differ significantly. While the definition of the two problems seems very similar it appears that the more general NAPS is significantly more complicated. We leave it as an interesting research direction to determine the precise relationship between NAPS and the tree inclusion problem.

Several generalizations of NAPS have also been studied relaxing the requirement that arcs should be nested~\cite{GJN2006, Evans1999, BFRV2005}. In nearly all cases the resulting problem becomes NP-complete.

 \subsection{Outline}
 In Sec.~\ref{sec:preliminaries} we give some preliminaries and define our 
 notation. Sec.~\ref{sec:recurrence} contains our dynamic programming recurrence. In Sec.~\ref{sec:algorithm} we present our main algorithm 
 achieving $O(m\log |A_Q| + n)$ space. Finally, in Sec.~\ref{sec:linearspace} we show how to compress the $\Gamma$ sequences stored by our algorithm to obtain $O(n + m)$ space.

\section{Preliminaries and Notation}\label{sec:preliminaries}
Let $S$ be an arc-annotated string with arc set $A_S$. The length of $S$ is the number of bases in $S$ and is denoted $|S|$. We will assume that our input strings $P$ and $Q$ have the arcs $(1, |P|)$ and $(1, |Q|)$, respectively. If this is not the case we may always add additional connected bases to the start and end of $P$ and $Q$ without affecting the solution or complexity of the problem. We do this only to ensure that the nesting of the arcs form a tree (rather than a forest) which simplifies the presentation of our algorithm.

The \emph{arc-annotated substring} $S[i_1, i_2]$,  $1 \leq i_1, i_2 \leq |S|$, is the string of bases starting at $i_1$ and ending at $i_2$.  The arc set associated with $S[i_1, i_2]$ is the subset of $A_S$ of arcs with both endpoints in $[i_1, i_2]$. We define $S[i_1] = S[i_1, i_1]$ and $S[i_1, i_2] = \epsilon$ (the empty string) if $i_1 > i_2$. Note the arc set of an arc-annotated string of length $\leq 1$ is also empty. A \emph{split} of $S$ is a partition of $S$ into two substrings $S[1, i]$ and $S[i+1, |S|]$, for some $i$, $0 \leq i \leq |S|$. The split is an \emph{arc-preserving split} if no arcs in $A_S$ cross $i$, i.e., all arcs either have both endpoints in $S[1, i]$ or $S[i+1, |S|]$. We say that the index $i$ \emph{induces} a (arc-preserving) split of $S$. 

\begin{figure}[t] 
  \centering \includegraphics[scale=.5]{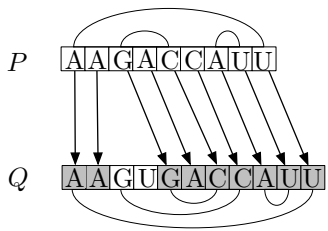}
  \caption{An embedding of $P$ in $Q$. $f(1)=1, f(2)=2$ and $f(j)=j+2$ for $j=3,4,\ldots,9$.  $A_P=\{(1,9),(3,5),(7,8)\}$ and $A_Q=\{(1,11),(3,8),(5,7),(9,10)\}$. We have $(f(1),f(9)) =(1,11) \in A_Q$, $(f(3),f(5))=(5,7)\in A_Q$, and $(f(7),f(8))=(9,10)\in A_Q$.}
  \label{fig:embedding}
\end{figure}
An \emph{embedding} of $P$ in $Q$ is an injective function $f: \{1,\ldots, m\} \rightarrow \{1, \ldots, n\}$ such that
\begin{enumerate}
\item  for all $j\in\{1,\ldots, m\}$, $P[j]= Q[f(j)]$. (base match condition)
\item for all indices $j_l,j_r \in \{1,\ldots, m\}$, $(j_l,j_r) \in A_P \Leftrightarrow (f(j_l),f(j_r))\in A_Q$. (arc match condition)
\item for all $i,j\in\{1,\ldots, m\}$, $i <j \Leftrightarrow f(i) < f(j)$. (order condition) 
\end{enumerate}
If $f(j)=i$ we say that $j$ is \emph{matched} to $i$ in the embedding. From the definition of arc-preserving subsequences we have that $P \sqsubseteq Q$ iff there is an embedding of $P$ in $Q$. Figure~\ref{fig:embedding} gives an example of an embedding.

\section{The Dynamic Programming Recurrence}\label{sec:recurrence}

In this section we give our dynamic programming recurrence for NAPS. Essentially, the recursion expresses for any pair of substrings $P'$ and $Q'$ of $P$ and $Q$, respectively, the longest prefix of $P'$ which is an arc-preserving subsequence of $Q'$ in terms of smaller substrings of $P'$ and $Q'$.

We show the following key properties of \aass s. 
%
\begin{figure}[t] 
  \centering \includegraphics[scale=.5]{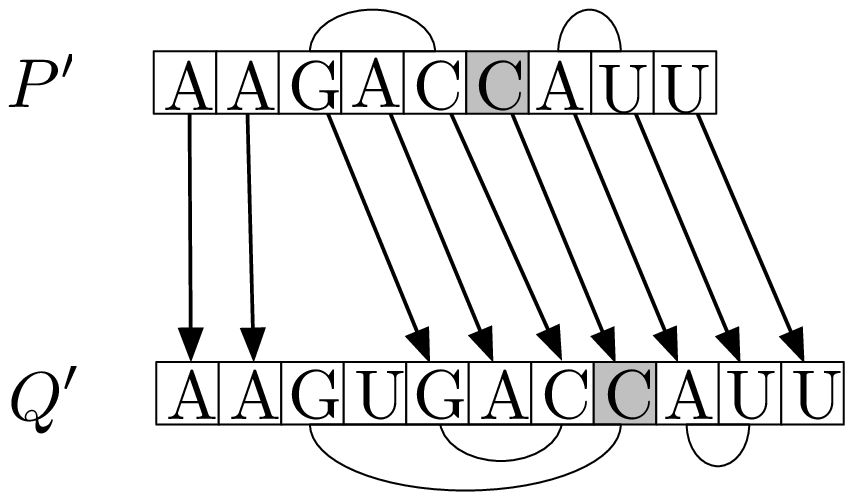} \includegraphics[scale=.5]{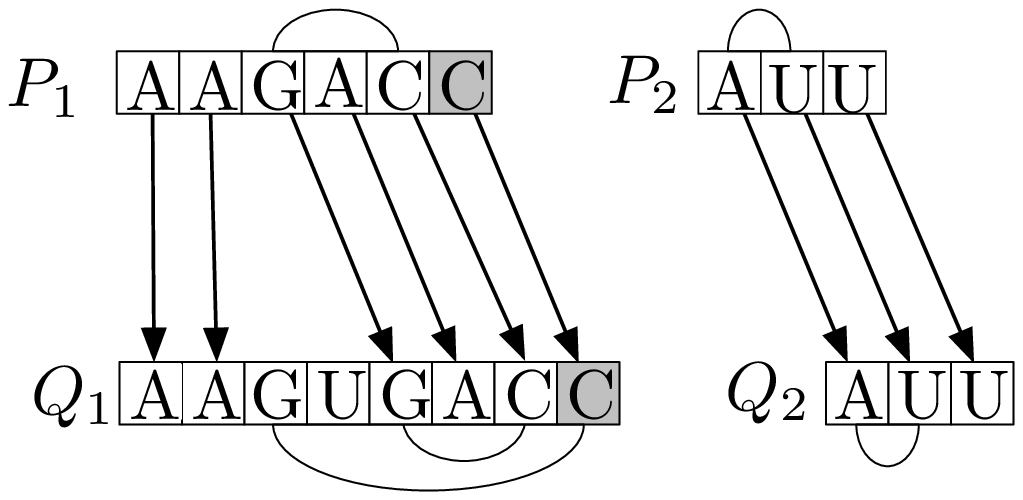}
  \caption{Illustation of the Splitting Lemma. Index $i=8$ induces an \aass\ of $Q'$ with $Q_1=Q'[1,8]$ and $Q_2=Q'[9,11]$. Index $j=6$ induces an \aass\ of $P'$ with $P_1=P'[1,6]$ and $P_2=P'[7,9]$ such that $P_1 \aps Q_1$ and $P_2 \aps Q_2$.}
  \label{fig:embedding}
\end{figure}

\begin{lemma}[Splitting Lemma]\label{lem:split}
Let $P'$ and $Q'$ be arc-annotated substrings of $P$ and $Q$, respectively, and let $(Q_1,Q_2)$ be any \aass\ of $Q'$.
\begin{itemize}
\item[(i)] 
If $P' \aps Q'$ then there exists an  \aass\  $(P_1,P_2)$ of $P'$ such that $P_1 \aps Q_1$ and $P_2 \aps Q_2$. 

\item[(ii)] Let $(P_1,P_2)$ be an \aass\ of $P'$. Then
$P_1 \aps Q_1 \textrm{ and }  P_2 \aps Q_2 \Rightarrow P' \aps Q'$.  
\end{itemize}
\end{lemma}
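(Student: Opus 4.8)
The plan is to work directly with embeddings, using the fact recorded above that $P' \aps Q'$ if and only if there is an embedding of $P'$ in $Q'$. Fix the index $i$ inducing the given \aass\ of $Q'$, so that $Q_1 = Q'[1,i]$ and $Q_2 = Q'[i+1,|Q'|]$. Two elementary observations will be reused throughout: (a) the arc set of a substring $S[a,b]$ is exactly the set of arcs of $A_S$ with both endpoints in $[a,b]$, reindexed by subtracting $a-1$; and (b) the order condition of an embedding makes the map monotone, so any prefix of the domain lands in a prefix of the codomain.

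For part (i), let $f$ be an embedding of $P'$ in $Q'$ and let $j$ be the number of indices $k$ with $f(k) \le i$. By (b), indices $1,\dots,j$ of $P'$ map into $[1,i]$ and indices $j+1,\dots,|P'|$ map into $[i+1,|Q'|]$. First I would check that $j$ induces an \aass\ of $P'$: if some arc $(j_l,j_r)\in A_{P'}$ had $j_l\le j<j_r$, then $f(j_l)\le i<f(j_r)$, so by the arc match condition the arc $(f(j_l),f(j_r))\in A_{Q'}$ would straddle $i$, contradicting that $(Q_1,Q_2)$ is an \aass. Setting $P_1=P'[1,j]$ and $P_2=P'[j+1,|P'|]$, it then remains to verify that $f$ restricted to $\{1,\dots,j\}$ is an embedding of $P_1$ in $Q_1$, and that $k\mapsto f(k+j)-i$ is an embedding of $P_2$ in $Q_2$. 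Base match and order are immediate; the arc match biconditional in each case reduces, via observation (a), to the arc match condition of $f$ together with the fact that these restricted maps stay inside $[1,i]$ and $[i+1,|Q'|]$ respectively.

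For part (ii), let $j$ induce the given \aass\ $(P_1,P_2)$ of $P'$, and let $f_1,f_2$ be embeddings of $P_1$ in $Q_1$ and of $P_2$ in $Q_2$. Define $f$ on $\{1,\dots,|P'|\}$ by $f(k)=f_1(k)$ for $k\le j$ and $f(k)=f_2(k-j)+i$ for $k>j$; I would show $f$ is an embedding of $P'$ in $Q'$. Base matches are inherited, and the order condition (hence injectivity) holds within each block and across blocks because $f_1$ takes values in $[1,i]$ while $f_2(\cdot)+i$ takes values in $[i+1,|Q'|]$. For the arc match condition, the forward direction uses that the \aass\ $(P_1,P_2)$ forces every arc of $P'$ to lie wholly inside $P_1$ or wholly inside $P_2$, hence by (a) maps through $f_1$ or $f_2$ to an arc of $Q_1$ or $Q_2$, which is an arc of $Q'$; the reverse direction uses symmetrically that the \aass\ $(Q_1,Q_2)$ forces every arc of $Q'$ with both endpoints in the image of $f$ to lie wholly inside $Q_1$ or inside $Q_2$, so its $f$-preimage lies wholly in $P_1$ or $P_2$ and pulls back through $f_1$ or $f_2$ to an arc of $P'$.

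The only genuinely load-bearing idea is the choice of the cut point $j$ in (i) as the last index of $P'$ that $f$ sends into $Q_1$, together with the observation that arc-preservation of the $Q$-split is precisely what prevents a $P$-arc from straddling $j$; once that is in place the rest is routine verification of the three embedding conditions, and the main thing to be careful about is the $\pm i$ and $\pm j$ index shifting when passing to and from substrings.
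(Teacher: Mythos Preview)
Your proposal is correct and follows essentially the same approach as the paper: in (i) you choose the cut point $j$ as the last index of $P'$ that $f$ sends into $Q_1$ (your ``number of indices $k$ with $f(k)\le i$'' is, by monotonicity of $f$, exactly the paper's ``largest index with $f(j)\in[i_1,i]$''), derive the arc-preserving split from the arc match condition by contradiction, and in (ii) you glue the two embeddings and verify the three conditions. The only difference is cosmetic---you work in local substring coordinates with explicit $\pm i,\pm j$ shifts, while the paper stays in global indices throughout.
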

\begin{proof}
Let $P'=P[j_1,j_2]$, $Q'=Q[i_1,i_2]$, $Q_1=Q[i_1,i]$, and $Q_2=Q[i+1,i_2]$.

To prove 
(i), let $f$ be an embedding of $P'$ in $Q'$ (such an embedding exists since $P'\aps Q'$). Let $j$ be the largest index such that $f(j) \in [i_1,i]$, i.e., $f(j)$ is a base in $Q_1$. It follows that $f_1:\{j_1,\ldots, j\} \rightarrow \{i_1,\ldots, i\}$, where $f_1(x)=f(x)$, is an embedding of $P[j_1,j]$ in $Q_1$, and thus $P[j_1,j] \aps Q_1$. Similarly,  $f_2:\{j+1,\ldots, j_2\} \rightarrow \{i+1,\ldots, i_2\}$, where $f_2(x)=f(x)$,  is an embedding of $P[j+1, j_2]$ in $Q_2$, and therefore $P[j+1,j_2] \aps Q_2$. We have now shown that there exists a split $(P_1,P_2)$, with $P_1=P[j_1,j]$ and $P_2=P[j+1,j_2]$, such that $P_1 \aps Q_1$ and $P_2\aps Q_2$. It remains to show that this is an \emph{arc-preserving} split, i.e., that there are no arcs from 
 $P[j_1,j]$ to $P[j+1,j_2]$ in $A_{P'}$. For contradiction assume that there exists an arc $(j_l,j_r)$ with $j_l \in [j_1,j]$ and $j_r \in [j+1,j_2]$. By the definition of $f_1$ and $f_2$ we have $f(j_l) = f_1(j_l) \in [i_1,i]$ and $f(j_r)=f_2(j_r) \in [i+1,i_2]$. Since $f$ is an embedding of $P'$ in $Q'$ it follows from the arc match condition that there is an arc $(f(j_l),f(j_r))\in A_{Q'}$. But this contradicts the fact that $(Q_1,Q_2)$ is an \aass\ of $Q'$. Thus there can be no such arc and
it follows that $j$ induces an \aass\ of $P'$.

To prove (ii), assume $P_1 \aps Q_1$ and  $P_2 \aps Q_2$. Let $f_1$ be the embedding of $P_1$ in $Q[i_1, i]$, let $f_2$ be the embedding of $P_2$ in $Q[i+1, i_2]$, and let $j$ be the index such that $P_1=P[j_1,j]$. 
We will show that the embedding $f:\{j_1,\ldots, j_2\} \rightarrow \{i_1,\ldots, i_2\}$, $$f(x) = \begin{cases}f_1(x) & x\in [j_1,j] \\ f_2(x) & x\in[j+1,j_2] \end{cases}$$  
 is an embedding of $P'$ in $Q'$.  
 
We have that $f$ satisfies the base match condition and the order condition. It remains to show that it satisfies the arc match condition. 
 Since $j$ induces an \aass\ of $P'$ we have $A_{P'} = A_{P_1} \cup A_{P_2}$. Let $(j_l,j_r)$ be an arc in $A_{P'}$. If $(j_l,j_r)\in A_{P_1}$, then $(f(j_l),f(j_r))=(f_1(j_l),f_1(j_r))\in A_{Q_1}$, since $f_1$ is an embedding of $P_1$ in $Q_1$. Similarly, if  $(j_l,j_r)\in A_{P_2}$, then $(f(j_l),f(j_r))=(f_2(j_l),f_2(j_r))\in A_{Q_2}$. Thus $(j_l,j_r)\in A_{P'} \Rightarrow (f(j_l),f(j_r)) \in A_{Q'}$. 
By the same kind of argument it follows that since $i$ induces an \aass\ of $Q'$, $ (f(j_l),f(j_r)) \in A_{Q'}  \Rightarrow (j_l,j_r)\in A_{P'}$.
\end{proof}
We now define $\rhop$, which we will use to give our dynamic programming recurrence for NAPS.

\begin{definition}
For $1 \leq j_l \leq m$, $l\in\{1,2\}$, and $1 \leq i_1 \leq i_2\leq n$, define $\rhop(j_1, j_2, i_1, i_2)$ to be the largest integer $k$ such that $P[j_1, k] \sqsubseteq Q[i_1, i_2]$ and $k$ induces an \aass\ of $P[j_1,j_2]$.
\end{definition}
It follows that $\rhop(1,m,1,n)=m$ if and only if $P\aps Q$.


The Splitting Lemma gives us 
a very useful  property of $\rhop$:
The requirement  that $k$ induces an \aass\ of $P[j_1,j_2]$ in the definition of $\rhop$ implies that if there exists an embedding $f$ of $P[k+1,j_2]$ in $Q[i_2,i]$ for some $i$ then by the Splitting Lemma the embedding of $P[j_1,k]$ in $Q[i_1,i_2]$ (which exists by the definition of $\rhop$) can be extended with $f$ to get an embedding of $P[j_1,j_2]$ in $Q[i_1,i]$.
This would not be true if we dropped the requirement that $k$ induces an \aass\ of $P[j_1,j_2]$. Formally,
\begin{corollary}\label{cor:split}
Let $i$ be an index inducing an \aass\ of $Q[i_1,i_2]$.
Then, $$\rhop(j_1,j_2,i_1,i_2) = \rhop(\rhop(j_1,j_2,i_1,i)+1,j_2,i+1,i_2)\;.$$
\end{corollary}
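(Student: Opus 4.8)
The plan is to set $k=\rhop(j_1,j_2,i_1,i)$ and prove the two inequalities $\rhop(j_1,j_2,i_1,i_2)\geq\rhop(k+1,j_2,i+1,i_2)$ and $\rhop(j_1,j_2,i_1,i_2)\leq\rhop(k+1,j_2,i+1,i_2)$ separately. First I would record two easy facts. (a) $\aps$ is transitive, and deleting a prefix or a suffix of an arc-annotated substring of $P$ or $Q$ yields an arc-preserving subsequence of it; hence $P[a,b]\aps P[a',b']$ and $Q[a,b]\aps Q[a',b']$ whenever $[a,b]\subseteq[a',b']$. (b) Since $k$ induces an \aass\ of $P[j_1,j_2]$, no arc of $A_{P[j_1,j_2]}$ has one endpoint in $[j_1,k]$ and the other in $[k+1,j_2]$; consequently, for every $x$ with $k\leq x\leq j_2$, $x$ induces an \aass\ of $P[k+1,j_2]$ if and only if $x$ induces an \aass\ of $P[j_1,j_2]$. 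I would also note that both sides of the claimed identity are at least $k$: for the right-hand side the index $x=k$ already qualifies (as $P[k+1,k]=\epsilon$ and $k$ trivially induces an \aass\ of $P[k+1,j_2]$), and for the left-hand side $P[j_1,k]\aps Q[i_1,i]\aps Q[i_1,i_2]$ together with the fact that $k$ induces an \aass\ of $P[j_1,j_2]$ does.

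For the inequality ``$\geq$'', put $k^{**}=\rhop(k+1,j_2,i+1,i_2)$, so $P[k+1,k^{**}]\aps Q[i+1,i_2]$ and $k^{**}$ induces an \aass\ of $P[k+1,j_2]$; since $k^{**}\geq k$, fact~(b) shows that $k^{**}$ also induces an \aass\ of $P[j_1,j_2]$. Splitting $P[j_1,k^{**}]$ at $k$ is arc-preserving, because $A_{P[j_1,k^{**}]}\subseteq A_{P[j_1,j_2]}$ and no arc of $A_{P[j_1,j_2]}$ crosses $k$; moreover $(Q[i_1,i],Q[i+1,i_2])$ is an \aass\ of $Q[i_1,i_2]$ by the hypothesis of the corollary. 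Hence part~(ii) of the Splitting Lemma, applied with $P_1=P[j_1,k]\aps Q[i_1,i]$ and $P_2=P[k+1,k^{**}]\aps Q[i+1,i_2]$, gives $P[j_1,k^{**}]\aps Q[i_1,i_2]$. As $k^{**}$ induces an \aass\ of $P[j_1,j_2]$, we get $\rhop(j_1,j_2,i_1,i_2)\geq k^{**}$.

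For the inequality ``$\leq$'', put $a=\rhop(j_1,j_2,i_1,i_2)$, so $P[j_1,a]\aps Q[i_1,i_2]$ and $a$ induces an \aass\ of $P[j_1,j_2]$, hence also of $P[k+1,j_2]$ since $a\geq k$. Applying part~(i) of the Splitting Lemma to $P[j_1,a]\aps Q[i_1,i_2]$ and the \aass\ $(Q[i_1,i],Q[i+1,i_2])$ yields an \aass\ $(P[j_1,k'],P[k'+1,a])$ of $P[j_1,a]$ with $P[j_1,k']\aps Q[i_1,i]$ and $P[k'+1,a]\aps Q[i+1,i_2]$. Now $k'$ induces an \aass\ of $P[j_1,j_2]$: no arc of $A_{P[j_1,a]}$ crosses $k'$ as this split is arc-preserving, and since $a$ induces an \aass\ of $P[j_1,j_2]$ every arc of $A_{P[j_1,j_2]}$ with an endpoint in $[a+1,j_2]$ has both endpoints there and hence does not cross $k'$ (as $k'\leq a$). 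Therefore $k=\rhop(j_1,j_2,i_1,i)\geq k'$. Since $k'\leq k\leq a$, deleting the prefix $P[k'+1,k]$ gives $P[k+1,a]\aps P[k'+1,a]\aps Q[i+1,i_2]$, and as $a$ induces an \aass\ of $P[k+1,j_2]$ we conclude $\rhop(k+1,j_2,i+1,i_2)\geq a$. Combining the two inequalities proves the corollary.

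I expect the delicate point to lie in the ``$\leq$'' direction: part~(i) of the Splitting Lemma only produces an index $k'$ inducing an \aass\ of $P[j_1,a]$, and one must both promote this to an \aass\ of the longer string $P[j_1,j_2]$ and then transport $P[k'+1,a]\aps Q[i+1,i_2]$ to $P[k+1,a]\aps Q[i+1,i_2]$ by prefix deletion; both steps hinge on $a$ and $k$ each inducing an \aass\ of $P[j_1,j_2]$, which is precisely the property of $\rhop$ stressed in the remark preceding the corollary. The boundary situations where an argument of $\rhop$ degenerates (for instance $k\in\{j_1-1,j_2\}$ or $i\in\{i_1-1,i_2\}$) are routine and absorbed by the empty-string conventions, so I would only mention them in passing.
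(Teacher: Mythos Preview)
Your proof is correct and follows essentially the same approach as the paper: both arguments prove the two inequalities separately, invoking part~(ii) of the Splitting Lemma for one direction and part~(i) for the other, with the same prefix-deletion trick $P[k+1,a]\aps P[k'+1,a]$ to pass from the Splitting-Lemma index $k'$ to the $\rhop$-value $k$. Your write-up is in fact a bit more careful than the paper's in two places---you explicitly justify why the index $k'$ obtained from the Splitting Lemma induces an \aass\ of the longer string $P[j_1,j_2]$ (the paper asserts this without argument), and you avoid the paper's slightly roundabout detour through a maximal index $j^*$.
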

\begin{proof} 
 Let $k=\rhop(j_1,j_2,i_1,i_2)$, $j= \rhop(j_1,j_2, i_1,i)$, and  $k'=\rhop(j+1,j_2,i+1,i_2)$. 
We want to show that $k'=k$. We will first show that $k'\leq k$ by using the second part of the Splitting Lemma. Next we show that $k' \geq k$ by using the first part of the splitting lemma.

Let $Q_1=Q[i_1,i]$ and $Q_2=[i+1,i_2]$. By the definition of $\rhop$, $j$ and $k'$ we have $$P[j_1,j]\aps Q_1 \textrm{ and  } P[j+1,k'] \aps Q_2\;.$$
Since $P[j_1,k] \aps Q[i_1,i_2]$ and $j+1 \geq j_1$ we have,
$$P[j+1,k] \aps Q[i_1,i_2]\;.$$
By the definition of $\rhop$, index $j$ induces an \aass\ of $P[j_1,j_2]$. Since $k'\leq j_2$ index $j$ also induces an \aass\ of $P[j_1,k']$.
By the Splitting Lemma (ii) we have $P[j_1,k']\aps Q[i_1,i_2]$. Since $k'$ induces an \aass\ of $P[j+1,j_2]$ and $j$ induces an \aass\ of $P[j_1,j_2]$ we have that $k'$ induces an \aass\ of $P[j_1,j_2]$. This, together with $P[j_1,k']\aps Q[i_1,i_2]$, implies $k' \leq k$, since by definition of $\rhop$, $k$ is the largest index inducing an \aass\ of $P[j_1,j_2]$ such that $P[j_1,k]\aps Q[i_1,i_2]$.

To show that $k \leq k'$ we use the first part of the Splitting Lemma. Since 
$i$ induces an \aass\ of $Q[i_1,i_2]$, by the Splitting Lemma (i) there exists a $j'$ such that $P[j_1,j']\aps Q_1$ and $P[j'+1,k] \aps Q_2$ and $j'$ induces an \aass\ of $P[j_1,j_2]$. Let $j^*$ be the largest such $j'$. We will show that $j^*=\rhop(j_1,j_2,i_1,i)=j$. This implies $P[j+1,k]=P[j^*+1,k] \aps Q_2$ and thus $k\leq k'$ since $k'$ is the \emph{largest} integer such that $P[j+1,k'] \aps Q_2$.

By definition $j$ is the largest integer inducing an \aass\ of $P[j_1,j_2]$ such that $P[j_1,j]\aps Q_1$ and thus $j^* \leq j$. But this implies $P[j+1,k] \aps P[j^*+1,k] \aps Q_2$. Thus $j=j^*$.
\end{proof}


Intuitively, the corollary says that to compute the largest prefix of $P$ that can be embedded in $Q$ we can greedily match the bases and right endpoints of arcs of $P$ as much to the left in $Q$ as possible.
The dynamic programming recurrence for $\rhop$ is as follows. The intuition behind the recurrence is that it corresponds to computing the leftmost embedding of $P[j_1,j_2]$ in $Q[i_1,i_2]$. 
%
\begin{align*}
\intertext{{\bf Base cases}. $\rhop(j_1,j_2,i_1,i_2)$ is equal to}
& 
\begin{cases}
	j_1-1 & \text{if }  j_1>j_2, \qquad \qquad \qquad \qquad\qquad \qquad\qquad \qquad\qquad\qquad \qquad \qquad  \;\;(1) \\
      	j_1 & \text{if } i_1=i_2 \text{ and } P[j_1]=Q[i_1]  \text{ and }   \\
	& \quad  (j_1,j_r)\not\in A_P \text{ for all } j_r\leq j_2 ,  \; \quad \qquad \qquad \qquad \qquad \qquad \qquad \qquad (2a) \\
   	j_1-1 &  \text{if } i_1=i_2 \text{ and (} P[j_1]\neq Q[i_1] \text{ or} \\
	& \quad (j_1,j_r)\in A_P \text{ for some }j_r\leq j_2 \text{).} \quad \qquad \qquad \qquad \qquad \qquad \qquad \,\;\;\; (2b)
\end{cases}  \\
\intertext{{\bf Recursive cases}. $i_1<i_2$ and $j_1 \leq j_2$.}
\intertext{If $(i_1,i_r)\not\in A_Q$ for all  $i_r\leq i_2$ then $\rhop(j_1,j_2,i_1,i_2)$ is equal to}
&
\begin{cases}
 	\rhop(j_1+1,j_2,i_1+1,i_2) & \text{if }  (j_1,j_r)\not\in A_P \text{ for all } j_r\leq j_2  \text{ and } P[j_1]=Q[i_1], \; \quad (3) \\
   	\rhop(j_1,j_2,i_1+1,i_2) & \text{if }  (j_1,j_r)\in A_P \text{ for some } j_r\leq j_2 \text{ or } P[j_1] \neq Q[i_1],   \quad (4) \\
\end{cases}\\
\intertext{If $(i_1,i_r)\in A_Q$  for some  $i_r < i_2$, then $\rhop(j_1,j_2,i_1,i_2)$ is equal to} 
& \rhop(\rhop(j_1,j_2,i_1,i_r)+1,j_2,i_r+1,i_2) \qquad\qquad  \qquad \qquad \qquad \qquad \qquad\qquad \qquad \quad \; \;(5) \\
\intertext{If $(i_1,i_2)\in A_Q$  then $\rhop(j_1,j_2,i_1,i_2)$ is equal to} 
&
\begin{cases}
	\max\{\rhop(j_1,j_2,i_1+1,i_2), \\ 
	\qquad \quad \rhop(j_1,j_2,i_1,i_2-1)\} & \text{if }  (j_1,j_r)\not\in A_P \text{ for all } j_r\leq j_2, \qquad \quad \qquad \quad \;\;\;\, (6) \\
	\rhop(j_1,j_2, i_1+1,i_2) & \text{if }  (j_1,j_r)\in A_P \text{ for some } j_r\leq j_2, \qquad \qquad   \qquad(7) \\ 
	&\qquad \text{and } P[j_1] \neq Q[i_1] \text{ or } P[j_r] \neq Q[i_2],\\
	\max\{\phi,\rhop(j_1,j_2,i_1+1,i_2)\}  & \text{if } (j_1,j_r)\in A_P \text{ for some } j_r\leq j_2, \qquad \qquad \qquad(8) \\ 
	& \qquad P[j_1] = Q[i_1] \text{ and } P[j_r] = Q[i_2],\\
\end{cases} \\
\intertext{where } 
\phi = &\begin{cases}
j_r & \text{if } \rhop(j_1+1,j_r-1, i_1+1,i_2-1)  =j_r -1 \\
j_1-1 
& \text{otherwise.}
\end{cases}
\end{align*}
The cases are visualized in Fig.~\ref{fig:recc}. 

\begin{figure}[t] 
  \centering \includegraphics[scale=.55]{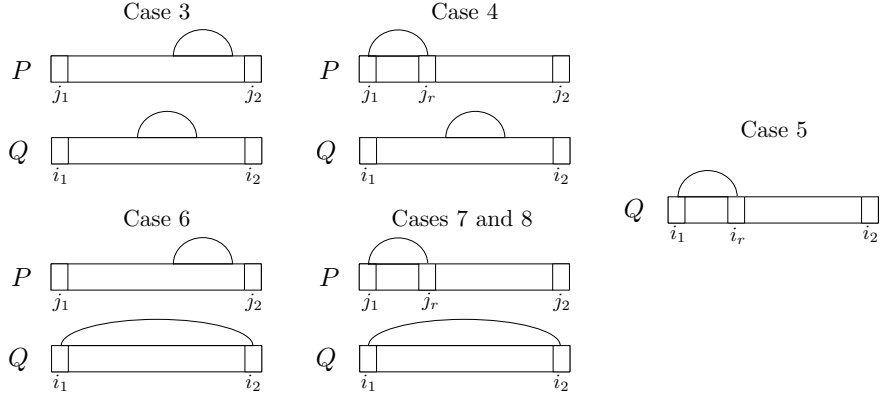}
  \caption{The main cases from the recurrence relation. Case (3): Neither $P$ or $Q$ starts with an arc.  Case (4): $P$ starts with an arc, $Q$ does not. Case (5): $Q$ starts with an arc not spanning $Q$. We split $Q$ after the arc and compute \rhop\ first in the first half and then continue the computation in the other. Case (6): $Q$ starts with an arc, $P$ does not. Case (7)-(8): Both $P$ and $Q$ starts with an arc.}
  \label{fig:recc}
\end{figure}

%
The base cases $(1)-(2)$ cover the cases where  $P[j_1,j_2]$ is the empty string ($j_2 > j_1$) or $Q[i_1,i_2]$ is a single base ($i_1=i_2$). 
Let $k=\rhop(j_1,j_2,i_1,i_2)$.
Case $(3)$ and $(5)$ follows directly from Corollary~\ref{cor:split}. In case $(4)$ and $(7)$ the base $Q[i_1]$ cannot be part of an embedding of $P[j_1,k]$ in $Q[i_1,i_2]$ and thus $\rhop(j_1, j_2, i_1,i_2)=\rhop(j_1, j_2, i_1+1,i_2)$. In case $(6)$ either $Q[i_1]$ or $Q[i_2]$, but not both, can be part of an embedding of $P[j_1,k]$ in $Q[i_1,i_2]$. Thus, $\rhop(j_1, j_2, i_1,i_2)=\max\{\rhop(j_1, j_2, i_1,i_2-1),\rhop(j_1, j_2, i_1+1,i_2)\}$. Case $(8)$ is the most complicated one. Both $Q[i_1,i_2]$ and $P[j_1,j_2]$ start with an arc and the bases of the arcs match.  An embedding of $P[j_1,k]$ into $Q[i_1,i_2]$ either 
(i) matches the two arcs, (ii) matches the arc $(j_1,j_r)$ and the rest of $P[j_1,k]$ in $Q[i_1+1,i_2]$ or (iii) matches nothing ($k=j_1-1$). In case (ii) $\rhop(j_1, j_2, i_1,i_2)=\rhop(j_1, j_2, i_1+1,i_2)$. Case (i) requires that $P[j_1+1, j_r-1] \aps Q[i_1+1,i_2-1]$. We express this in the recurrence by using an auxiliary function $\phi$ which is $j_r$ if $\rhop(j_1+1, j_r-1, i_1+1,i_2-1)= j_r-1$ and $j_1-1$ otherwise, since in the last case the arc $(j_1,j_r)$ cannot be matched to the arc $(i_1,i_2)$.
Since we want the largest match we take the maximum of the two cases (i) and (ii) (case (iii) is covered by these two).

\paragraph{Relation to the Gramm et al.\ algorithm} The recurrence relation is not very different from the one implicit in the algorithm by Gramm  et al.~\cite{GJN2006}. Most of the single cases are the same. The main difference is that they intermix the description of the algorithm and the recurrence. And where we have the requirement that $\rhop(j_1,j_2,i_1,i_2)$ induces an \aass\ of $P[j_1,j_2]$ they instead specify a specific order in which to calculate the recurrence and save some auxiliary information during the computation. Thus our definition of $\rhop$ gives us the possibility to state the recurrence relation independently of the algorithm.



\section{The Algorithm}\label{sec:algorithm}
We now present an algorithm to solve NAPS in $O(nm)$  time and $O(m \log |A_Q| + n)$ space. In the next section we show how to further reduce the space to $O(n + m)$ to get Theorem~\ref{thm:main}. The result relies on a well-known path decomposition for trees applied to arc-annotated strings combined with a new idea to organize the dynamic programming recurrence computation. We present these in Sections~\ref{sec:heavypath} and~\ref{sec:gammasequence} before giving the algorithm in Section~\ref{sec:subalgorithm}.

\subsection{Heavy-Path Decomposition of Arc-Annotated Sequences}\label{sec:heavypath}
Let $S$ be a nested arc-annotated string containing the arc $(1, |S|)$ (recall that we assume that both $P$ and $Q$ have this arc).  The arcs in $A_S$ induce a rooted and ordered tree $T_S$ rooted at the arc $(1, |S|)$ as shown in Fig.~\ref{fig:treestructure}(b). We use standard tree terminology for the relationship between arcs in $T_S$.  Let $(i_l, i_r)$ be an arc in $A_S$. The \emph{depth} of $(i_l, i_r)$ is the number of edges on the path  from $(i_l, i_r)$ to the root in $T_S$. An arc with no children is a leaf arc and otherwise an internal arc. Define $T_S(i_l, i_r)$ to be the subtree of $T_S$ rooted at $(i_l, i_r)$ and let $\size(i_l, i_r)$ be the number of arcs in $T_S(i_l, i_r)$. Note that $\size(1, |S|) = |A_S|$. If $(i_l', i_r')$ is an arc in $T_S(i_l, i_r)$, then $(i_l, i_r)$ is an ancestor of $(i_l', i_r')$ (note that $(i_l, i_r)$ is an ancestor of itself).  If $(i_l, i_r)$ is an ancestor of $(i_l', i_r')$, then $(i_l', i_r')$ is a descendant of $(i_l, i_r)$.

As in~\cite{HT1984} we partition $T_S$ into disjoint paths. We classify each arc as either \emph{heavy} or \emph{light}. The root is light. For each internal arc $(i_l, i_r)$ we pick a child $(i^h_l, i^h_r)$ of maximum size and classify it as heavy. The remaining children are light. An edge to a light child is a \emph{light edge} and an edge to a heavy child is a \emph{heavy edge}. Let $\lightdepth(i_l, i_r)$ denote the number of light edges on the path from $(i_l, i_r)$ to the root of $T_S$. If $(i_l', i_r')$ is a light child of $(i_l, i_r)$, then $\size(i_l', i_r') \leq \size(i_l, i_r)/2$ since otherwise $(i_l', i_r')$ would be heavy. Consequently, the number of light edges on a path from the root to a leaf is at most logarithmic. Specifically, we will use the following well-known bound for trees restated for nested arc-annotated sequences.
\begin{lemma}[Harel and Tarjan \cite{HT1984}]\label{lem:lightdepth}
Let $S$ be a nested arc-annotated string containing the arc $(1, |S|)$. For any arc $(i_l, i_r) \in A_S$, $\lightdepth(i_l, i_r) \leq \log |A_S| + O(1)$.
\end{lemma}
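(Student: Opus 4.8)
The plan is to prove the bound by a standard potential/counting argument on the light edges along a root-to-arc path, exactly as in Harel and Tarjan. First I would fix an arc $(i_l,i_r)\in A_S$ and consider the path $v_0=(1,|S|), v_1, \ldots, v_k=(i_l,i_r)$ from the root down to $(i_l,i_r)$ in $T_S$, where $k$ is the depth of $(i_l,i_r)$. I would then restrict attention to the subsequence of edges on this path that are \emph{light}; say there are $\ell = \lightdepth(i_l,i_r)$ of them, occurring at positions where the path descends from some arc $u$ to a light child $u'$.

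The key step is the size-halving property already established in the excerpt: if $(i_l',i_r')$ is a light child of $(i_l,i_r)$ then $\size(i_l',i_r') \leq \size(i_l,i_r)/2$, because the heavy child was chosen to have maximum size, so a strictly larger light child would contradict that choice (more carefully, a light child can have size at most that of the heavy child, hence at most half the total size of the parent's subtree, since parent's size is at least $1$ plus twice the heavy child's size when... — the clean statement is simply the one quoted, that a light child has size at most half its parent's size). Applying this at each of the $\ell$ light edges along the path, and using that $\size$ is non-increasing as we move down the tree (subtree containment), I get
\[
1 \leq \size(i_l,i_r) \leq \frac{\size(1,|S|)}{2^{\ell}} = \frac{|A_S|}{2^{\ell}},
\]
where the first inequality holds because $T_S(i_l,i_r)$ contains at least the arc $(i_l,i_r)$ itself. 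Rearranging gives $2^{\ell} \leq |A_S|$, i.e. $\ell \leq \log|A_S|$, which is even slightly stronger than the claimed $\log|A_S| + O(1)$; the $O(1)$ slack is just there to absorb rounding and the degenerate case $|A_S| = 0$ (no arcs, empty path).

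I do not expect any real obstacle here — this is the textbook heavy-path argument, and the only thing to be a little careful about is making sure $\size$ is interpreted as counting arcs (nodes of $T_S$) so that a nonempty subtree has size at least $1$, and confirming that the size-halving inequality composes multiplicatively down the path (it does, since each light step is measured against the current subtree's size, and intervening heavy steps only decrease the size further). I would therefore write: let the light edges on the root-to-$(i_l,i_r)$ path go from $u_1 \to u_1', \ldots, u_\ell \to u_\ell'$ in top-to-bottom order; then $\size(u_{t}') \leq \size(u_t)/2 \leq \size(u_{t-1}')/2$ for each $t$ (with $\size(u_0') := |A_S|$), so by induction $\size(u_\ell') \leq |A_S|/2^\ell$; since $(i_l,i_r)$ lies in $T_S(u_\ell')$ we have $1 \leq \size(u_\ell') \leq |A_S|/2^\ell$, giving $\lightdepth(i_l,i_r) = \ell \leq \log|A_S| \leq \log|A_S| + O(1)$.
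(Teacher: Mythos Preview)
Your argument is correct and is exactly the standard Harel--Tarjan size-halving argument. The paper does not actually give a proof of this lemma---it is cited from~\cite{HT1984}---but the sentence preceding the lemma supplies precisely the same key observation (a light child has at most half the parent's size), which you have faithfully expanded into the full inductive chain.
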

Removing the light edges we partition $T_S$ into \emph{heavy paths}.

\subsection{Manipulating $\Gamma$ Sequences}\label{sec:gammasequence}
For positions $i_1$ and $i_2$ in $Q$, $i_1 \leq i_2$, define the \emph{$\Gamma$ sequence} for $i_1$ and $i_2$ as  
\begin{equation*}
\Gamma(i_1, i_2) = \gamma(m, m, i_1, i_2), \gamma(m-1, m, i_1, i_2), \ldots, \gamma(1, m, i_1, i_2).
\end{equation*}
Thus, $\Gamma(i_1, i_2)$ is the sequence of endpoints of the longest prefixes of each suffix of $P$ that is an arc-preserving subsequence of $Q[i_1, i_2]$. We can efficiently  manipulate $\Gamma$ sequences as suggested by the following lemma.
\begin{lemma}\label{lem:primitives}
For any positions $i_1$ and $i_2$ in $Q$, $i_1 \leq i_2$, we can compute in $O(m)$ time 
\begin{itemize}
\item[(i)] $\Gamma(i_2, i_2)$.
\item[(ii)] $\Gamma(i_1, i_2)$ from $\Gamma(i_1 + 1, i_2)$ if $(i_1, i_r) \not\in A_Q$ for every $i_r \leq i_2$.
\item[(iii)] $\Gamma(i_1, i_2)$ from $\Gamma(i_1, i_r)$ and $\Gamma(i_r + 1, i_2)$ if $(i_1, i_r) \in A_Q$ for some $i_r < i_2$.
\item[(iv)] $\Gamma(i_1,i_2)$ from $\Gamma(i_1, i_2 - 1)$, $\Gamma(i_1 + 1, i_2)$, and $\Gamma(i_1+1,i_2-1)$ if $(i_1, i_2) \in A_Q$. 
\end{itemize}
\end{lemma}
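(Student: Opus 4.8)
The plan is to prove each of the four items by directly unfolding the definition of the $\Gamma$ sequence and invoking the recurrence cases from Section~\ref{sec:recurrence}, observing that the entry $\gamma(j_1, m, i_1, i_2)$ of $\Gamma(i_1,i_2)$ is computed by exactly one recurrence case, and that case only refers to $\rhop$-values that either appear as entries of the input $\Gamma$ sequences or can be obtained from them in aggregate.

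\textbf{Items (i) and (ii).} For (i), $\Gamma(i_2,i_2)$ has entries $\rhop(j_1,m,i_2,i_2)$ for $j_1 = m, m-1, \ldots, 1$, each of which is a base case $(2a)$ or $(2b)$. Given $P$ and $Q$ we can test in $O(1)$ amortized time per index whether $j_1$ is the left endpoint of an arc of $P$ (precompute, once and for all, for each $j$ whether $(j,j_r)\in A_P$ for some $j_r$; this is part of the $O(n+m)$ preprocessing and not charged here), so each entry costs $O(1)$ and the whole sequence costs $O(m)$. For (ii), when $(i_1,i_r)\notin A_Q$ for all $i_r\le i_2$, each entry $\rhop(j_1,m,i_1,i_2)$ is given by case $(3)$ or case $(4)$. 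Case $(4)$ copies the entry $\rhop(j_1,m,i_1+1,i_2)$ of $\Gamma(i_1+1,i_2)$ directly; case $(3)$ reads $\rhop(j_1+1,m,i_1+1,i_2)$, which is the \emph{previous} entry of the already-available sequence $\Gamma(i_1+1,i_2)$. Scanning $\Gamma(i_1+1,i_2)$ once from $j_1=m$ down to $1$ and applying the appropriate case (distinguished by whether $P[j_1]=Q[i_1]$ and whether $j_1$ begins an arc of $P$) yields $\Gamma(i_1,i_2)$ in $O(m)$ time.

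\textbf{Item (iii).} Here $(i_1,i_r)\in A_Q$ with $i_r < i_2$, so every entry is computed by case $(5)$: $\rhop(j_1,m,i_1,i_2) = \rhop\bigl(\rhop(j_1,m,i_1,i_r)+1,\,m,\,i_r+1,i_2\bigr)$. The inner value $\rhop(j_1,m,i_1,i_r)$ is precisely entry $j_1$ of $\Gamma(i_1,i_r)$; call it $k$. The outer value $\rhop(k+1,m,i_r+1,i_2)$ is entry $k+1$ of $\Gamma(i_r+1,i_2)$ (and when $k=m$ it equals $m$ by base case $(1)$, which we handle as the boundary). So each entry of $\Gamma(i_1,i_2)$ is obtained by one lookup into $\Gamma(i_1,i_r)$ followed by one indexed lookup into $\Gamma(i_r+1,i_2)$ — a constant number of array accesses — giving $O(m)$ total.

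\textbf{Item (iv).} This is the main obstacle, because it is the only case that uses the auxiliary function $\phi$ and hence a value $\rhop(j_1+1,j_r-1,i_1+1,i_2-1)$ in which \emph{both} the string bounds differ from $m$ and the $Q$-interval is the shrunk interval $[i_1+1,i_2-1]$ — so it is not literally an entry of any $\Gamma$ sequence over the full pattern. The fix is to observe that $\phi$ only depends on the \emph{predicate} ``$\rhop(j_1+1,j_r-1,i_1+1,i_2-1) = j_r-1$'', i.e.\ on whether $P[j_1+1,j_r-1]\aps Q[i_1+1,i_2-1]$. I claim this predicate is detectable from the sequence $\Gamma(i_1+1,i_2-1)$: by Corollary~\ref{cor:split}-style reasoning (or directly, since $j_r-1$ induces an \aass{} of $P[j_1+1,m]$ when $(j_1,j_r)\in A_P$ by nestedness), $P[j_1+1,j_r-1]\aps Q[i_1+1,i_2-1]$ holds iff $\rhop(j_1+1,m,i_1+1,i_2-1) \ge j_r-1$, and the latter is entry $j_1+1$ of $\Gamma(i_1+1,i_2-1)$. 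So for each $j_1$ that begins an arc $(j_1,j_r)$ of $P$ with matching bases, we compute $\phi$ in $O(1)$ from one entry of $\Gamma(i_1+1,i_2-1)$; the remaining term $\rhop(j_1,m,i_1+1,i_2)$ (cases $(6)$, $(7)$, $(8)$) is entry $j_1$ of $\Gamma(i_1+1,i_2)$, and the term $\rhop(j_1,m,i_1,i_2-1)$ in case $(6)$ is entry $j_1$ of $\Gamma(i_1,i_2-1)$. One pass over the (at most) three input sequences, branching per index on the arc/character predicates, then produces $\Gamma(i_1,i_2)$ in $O(m)$ time. The writeup should make explicit the nestedness argument justifying that $j_r-1$ induces an \aass{} of the relevant substring, since that is what licenses replacing the awkward four-index $\rhop$-value by a single $\Gamma$-entry lookup.
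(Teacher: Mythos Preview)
Your proof is correct and follows essentially the same approach as the paper, which simply points each item to the corresponding recurrence cases (item (i) to base case (2), (ii) to (3)--(4), (iii) to (5), and (iv) to (6)--(8)). Your treatment of item~(iv) is in fact more careful than the paper's one-line proof: you explicitly identify and resolve the issue that the $\phi$-term involves $\rhop(j_1+1,j_r-1,i_1+1,i_2-1)$ with second argument $j_r-1\neq m$, and your reduction of the predicate ``$\rhop(j_1+1,j_r-1,i_1+1,i_2-1)=j_r-1$'' to the test ``$\rhop(j_1+1,m,i_1+1,i_2-1)\ge j_r-1$'' via the nestedness of $(j_1,j_r)$ is exactly the detail the paper leaves implicit.
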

\begin{proof}
All the cases follow directly from the dynamic programming recurrence. Case (i) follows from case (2) of the recurrence, Case (ii) from  case (3) and (4) of the recurrence, Case (iii) from case (5) of the recurrence and Case (iv) from case (6)--(8) of the recurrence.
\end{proof}
We will use each of 4 cases in Lemma~\ref{lem:primitives} as primitive operations in our algorithm and we refer to (i), (ii), (iii), and (iv) as an \emph{initialize}, an \emph{extend}, a \emph{combine}, and a \emph{meld} operation, respectively. Fig.~\ref{fig:primitives} illustrates the extend, combine, and meld operations. An extend operation from $\Gamma(i_1 + k, i_2)$ to $\Gamma(i_1, i_2)$, for some $k > 1$, is defined to be the sequence of $k$ extend operations needed to compute $\Gamma(i_1, i_2)$ from $\Gamma(i_1 + k, i_2)$.
\begin{figure}[t] 
  \centering \includegraphics[scale=.5]{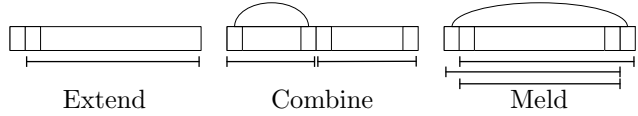}
  \caption{The extend, combine, and meld operations, respectively. For each operation the substring range(s) below the  string indicate the endpoints of the input $\Gamma$ sequence(s) needed in the operation to compute the $\Gamma$ sequence for the entire string.}
\label{fig:primitives}
\end{figure}

\subsection{The Algorithm}\label{sec:subalgorithm}
We now present our main algorithm. Initially, we construct $T_Q$ with a heavy path decomposition in $O(n)$ time and space. Then, we recursively compute $\Gamma$ sequences for each arc $(i_l, i_r) \in A_Q$ in a top-down traversal of $T_Q$. The $\Gamma$ sequence for the root contains the value $\gamma(1, m, 1, n)$ and hence this suffices to solve NAPS. The main idea is as follows. At each internal arc we first recursively compute the $\Gamma$ sequence for the heavy child and then compute $\Gamma$ sequences for the remaining light children in a right-to-left order (we will later see that this processing order is essential for the achieving the space bound). At the same time we use the extend and combine operations to compute $\Gamma$ sequences with a right endpoint at $i_r$ or $i_r - 1$ and a left endpoint at positions between children in the same left-to-right order. Finally, we use the meld operation on $\Gamma(i_l + 1, i_r)$ and $\Gamma(i_l, i_r -1)$ to get $\Gamma(i_l, i_r)$. 

At an arc $(i_l, i_r) \in A_Q$ in the traversal there are two cases to consider:

\paragraph{Case 1: $(i_l, i_r)$ is a leaf arc.} 
We compute $\Gamma(i_l, i_r)$ as follows.
\begin{enumerate}
\item Initialize  $\Gamma(i_r, i_r)$ and $\Gamma(i_r - 1, i_r - 1)$.
\item Extend $\Gamma(i_r , i_r)$ and $\Gamma(i_r - 1, i_r - 1)$ to get $\Gamma(i_l + 1 , i_r)$, $\Gamma(i_l, i_r - 1)$, and $\Gamma(i_l+1, i_r - 1)$.
\item Meld $\Gamma(i_l + 1 , i_r)$, $\Gamma(i_l, i_r - 1)$, and $\Gamma(i_l+1, i_r - 1)$ to get $\Gamma(i_l, i_r)$.
\end{enumerate}

\paragraph{Case 2: $(i_l, i_r)$ is an internal arc.}
Let $(i^1_l, i^1_r), \ldots, (i^s_l, i^s_r)$ be the childen arcs of $(i_l, i_r)$ in left-to-right order. To simplify the algorithm we set $i^0_r = i_l$. We compute $\Gamma(i_l, i_r)$ as follows.
\begin{enumerate}
\item Recursively compute $R_h := \Gamma(i^h_l, i^h_r)$, where $(i^h_l, i^h_r)$ is the heavy child arc of $(i_l, i_r)$. 
\item Initialize $\Gamma(i_r, i_r)$ and $\Gamma(i_r - 1, i_r -1)$. 
\item Extend $\Gamma(i_r, i_r)$ and $\Gamma(i_r - 1, i_r -1)$ to get $\Gamma(i^s_r + 1, i_r)$ and $\Gamma(i^s_r + 1, i_r - 1)$.
\item For $k := s$ down to $1$ do:
\begin{enumerate}
\item If $k \neq h$ recursively compute $R_k := \Gamma(i^k_l, i^k_r)$.
\item Combine $R_k$ with $\Gamma(i^k_r + 1, i_r)$ and with $\Gamma(i^{k}_r + 1, i_r - 1)$ to get $\Gamma(i^k_l, i_r)$ and $\Gamma(i^k_l, i_r - 1)$. 
\item Extend $\Gamma(i^k_l, i_r)$ and $\Gamma(i^k_l, i_r - 1)$ to get $\Gamma(i^{k-1}_r + 1, i_r)$ and $\Gamma(i^{k-1}_r + 1, i_r - 1)$. 
 \end{enumerate}
\item Extend  $\Gamma(i_l+1, i_r -1)$ to get $\Gamma(i_l, i_r -1)$.
\item Meld $\Gamma(i_l + 1, i_r)$, $\Gamma(i_l, i_r -1)$, and $\Gamma(i_l+1, i_r -1)$ to get $\Gamma(i_l, i_r)$. 
\end{enumerate}
\medskip
The computation in case 2 is illustrated in Fig.~\ref{fig:algorithm}. Note that when $k = 1$ in the loop in line 4, line 4(c) computes $\Gamma(i^{0}_r + 1, i_r) = \Gamma(i_l + 1, i_r)$ and $\Gamma(i_r^0 + 1, i_r - 1) =\Gamma(i_l+1, i_r - 1)$. In both cases above the algorithm computes several \emph{local $\Gamma$ sequences} of the form $\Gamma(i, i_r)$ and $\Gamma(i, i_r - 1)$, for some $i \leq i_r$. These sequences are computed in order of decreasing values of $i$ and each sequence only depends on the previous one and recursively computed $\Gamma$ sequences. Hence, we only need to store a constant number of local sequences during the computation at $(i_l, i_r)$. 
\begin{figure}[t] 
  \centering \includegraphics[scale=.5]{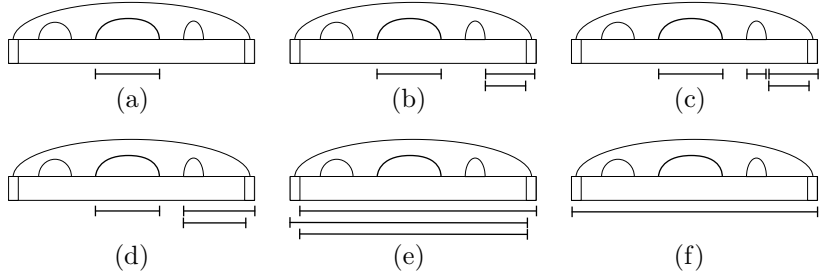}
  \caption{Snapshot of the $\Gamma$ sequences computed at an internal arc. The ranges below the arc-annotated sequences represent $\Gamma$ sequence endpoints. (a) After the recursive call to the heavy child in line 1. (b) After the extend operations in line 3. (c) After the recursive call in line 4(a) (d) After the combine operations in line 4(b). (e) Before the meld operation in line 6. (f) After the meld operation.}
\label{fig:algorithm}
\end{figure}

\subsection{Analysis}
We first consider the time complexity of the algorithm. To do so we bound the total number of primitive operations. For each arc in $A_Q$ there is $1$ initialize and $1$ meld operation and for each internal arc there is $1$ combine operation. Hence, the total number of initialize, meld, and combine operations is $O(|A_Q|)$. To count the number of extend operations we first define for any arc $(i_l, i_r) \in A_Q$ the set $\spaces(i_l, i_r)$ as the set of positions inside $(i_l, i_r)$ but not inside any child arc of $(i_l, i_r)$, that is, 
$$
\spaces(i_l, i_r) = \{i \mid i_l \leq i \leq i_r \text{ but not } i_l^k \leq i \leq i_r^k \text{ for any child  $(i_l^k, i_r^k)$ of $(i_l, i_r)$}\}.
$$
For example, $\spaces(1, 11)$ for $Q$ in Fig.~\ref{fig:treestructure}(a) is $\{1, 2, 11\}$. 
The $\spaces$ sets for all arcs is a partition of the positions in $Q$ and 
thus  $\sum_{(i_l, i_r) \in A_Q} \spaces(i_l, i_r) = n$. At an arc $(i_l, i_r)$ 
the algorithm performs 
$O(\spaces(i_l, i_r))$ extend operations and hence the total number of extend operations is $O(n)$. 
By Lemma~\ref{lem:primitives} each primitive operation takes $O(m)$ time and therefore the total running time of the algorithm is $O(|A_Q| m + n m) = O(nm)$.

For  the space complexity we bound the number of $\Gamma$ sequences stored by the algorithm. When the algorithm visits an arc $(i_l, i_r)$ we are currently processing a nested sequence of recursive calls corresponding to a path $p$ in $T_Q$ from the root to $(i_l, i_r)$. The number of $\Gamma$ sequences stored at each of these recursive calls is the total number of $\Gamma$ sequences stored. Consider an edge $e$ in $p$ from a parent $(i_l', i_r')$ to a child  $(i_l'', i_r'')$. If $e$ is heavy the recursive call to $(i_l'', i_r'')$ is done in line 1 of case 2 in the algorithm immediately at the start of the visit to $(i_l', i_r')$. Therefore, no $\Gamma$ sequence at $(i_l', i_r')$ is stored. If $e$ is light the recursive call to $(i_l'', i_r'')$ is done in line 4(a). The algorithm stores at most $3$ $\Gamma$ sequences, namely $\Gamma(i_r'' + 1, i_r')$, $\Gamma(i_r'' + 1, i_r' - 1)$, and $\Gamma({i^h_l}', {i^h_r}')$, where $({i^h_l}', {i^h_r}')$ is the heavy child of $(i_l', i_r')$. By Lemma~\ref{lem:lightdepth} there are at most $\log |A_Q| + O(1)$ light ancestors of $(i_l, i_r)$ in $T_Q$ and therefore the total space for stored $\Gamma$ sequences is $O(m \log |A_Q|)$. The additional space used by the algorithm is $O(n)$. Hence, we have the following result.

\begin{lemma}\label{lem:uncompressed}
Given nested arc-annotated strings $P$ and $Q$ of lengths $m$ and $n$, respectively, we can solve the nested arc-preserving subsequence problem in time $O(nm)$ and space $O(m\log |A_Q| + n)$.
\end{lemma}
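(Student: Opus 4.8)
The plan is to verify that the recursive algorithm of Section~\ref{sec:subalgorithm} correctly computes $\Gamma(1,n)$, and then to account separately for its time and its space. Correctness plus these two counting arguments give exactly the claimed bounds, so the statement follows.

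For correctness I would argue by induction on $T_Q$ that the call at an arc $(i_l,i_r)$ returns $\Gamma(i_l,i_r)$. At a leaf arc there are no nested arcs, so $\Gamma(i_l,i_r)$ is obtained from $\Gamma(i_r,i_r)$ and $\Gamma(i_r-1,i_r-1)$ by one initialize, a run of extend operations across the arc-free interior, and a meld, each a legal primitive by Lemma~\ref{lem:primitives}. At an internal arc the inductive hypothesis supplies the $\Gamma$ sequences for all children; scanning the children right to left, each combine step glues on the child subtree $\Gamma(i^k_l,i^k_r)$, each extend crosses a maximal arc-free gap between consecutive children, and the final meld on $\Gamma(i_l+1,i_r)$, $\Gamma(i_l,i_r-1)$, $\Gamma(i_l+1,i_r-1)$ yields $\Gamma(i_l,i_r)$. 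Since $\gamma(1,m,1,n)$ sits in $\Gamma(1,n)$ and equals $m$ iff $P\aps Q$, this shows the algorithm solves NAPS.

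For the time I would count primitive operations by type: each arc contributes one initialize and one meld, and each internal arc one combine, for $O(|A_Q|)$ of these. For extends, the sets $\spaces(i_l,i_r)$ of positions inside $(i_l,i_r)$ but inside no child arc partition $\{1,\dots,n\}$, and the algorithm does $O(|\spaces(i_l,i_r)|)$ extends at $(i_l,i_r)$, so $O(n)$ extends overall. Each primitive costs $O(m)$ by Lemma~\ref{lem:primitives}, giving $O(|A_Q|m+nm)=O(nm)$; building $T_Q$ with its heavy-path decomposition is $O(n)$. For the space I would bound the number of $\Gamma$ sequences alive along the root-to-$(i_l,i_r)$ path in $T_Q$. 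A heavy edge stores none, because the recursive call to a heavy child is issued at the very start of the parent's visit, before any $\Gamma$ sequence is stored. A light edge stores at most three: the two local sequences $\Gamma(i_r''+1,i_r')$, $\Gamma(i_r''+1,i_r'-1)$ and the heavy child's result $R_h$. By Lemma~\ref{lem:lightdepth} the path has $\log|A_Q|+O(1)$ light edges, so $O(m\log|A_Q|)$ space for stored sequences, plus $O(n)$ for the tree and input.

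I expect the main obstacle to be the correctness bookkeeping at internal arcs: one must check that the right-to-left scan maintains exactly the invariant ``after processing child $k$ we hold $\Gamma(i^{k-1}_r+1,i_r)$ and $\Gamma(i^{k-1}_r+1,i_r-1)$'', that the convention $i^0_r=i_l$ makes line 4(c) with $k=1$ produce $\Gamma(i_l+1,\cdot)$, and that the preconditions of the combine and extend primitives always hold --- in particular that the left endpoint of a local $\Gamma$ sequence never falls strictly inside a child arc, so no arc is ever straddled. Once this invariant is pinned down, the operation counts and the heavy-path space argument are routine.
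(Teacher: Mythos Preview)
Your proposal is correct and mirrors the paper's own analysis almost verbatim: the time bound via counting initialize/meld/combine operations as $O(|A_Q|)$ and extend operations as $O(n)$ through the $\spaces$ partition, and the space bound via the heavy/light edge dichotomy along the recursion path together with Lemma~\ref{lem:lightdepth}, are exactly the arguments the paper gives. The only addition is that you spell out the correctness invariant for the right-to-left scan at internal arcs, which the paper leaves implicit in the algorithm description; this is a welcome clarification but not a different approach.
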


\section{Squeezing into Linear Space}\label{sec:linearspace}
We now show how to compress $\Gamma$ sequences into a compact representation using $O(m)$ \emph{bits}. Plugging the new representation into our algorithm the total space becomes $O(n + m)$ as desired for Theorem~\ref{thm:main}. 

Our compression scheme for $\Gamma$ sequences relies on the following key property of the values of $\gamma$.
\begin{lemma}\label{lem:consecutive}
For any integers $j_1, j_2, i_1, i_2$, $1 \leq j_1 \leq j_2 \leq m$, $1 \leq i_1 \leq i_2 \leq n$, 
\begin{equation*}
j_1 -1\leq \gamma(j_1, j_2, i_1, i_2) \leq \gamma(j_1 + 1, j_2, i_1, i_2) \leq m
\end{equation*}
\end{lemma}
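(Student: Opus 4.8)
The plan is to establish the two nontrivial inequalities $j_1 - 1 \leq \gamma(j_1, j_2, i_1, i_2)$ and $\gamma(j_1, j_2, i_1, i_2) \leq \gamma(j_1+1, j_2, i_1, i_2)$ separately; the outer bound $\gamma(j_1+1,j_2,i_1,i_2) \leq m$ is immediate from the definition of $\gamma$, and the lower bound $j_1 - 1 \leq \gamma(j_1,j_2,i_1,i_2)$ follows directly from the fact that $P[j_1, j_1-1] = \epsilon \sqsubseteq Q[i_1,i_2]$ and that $j_1 - 1$ trivially induces an \aass\ of $P[j_1, j_2]$ (the left part $P[j_1, j_1 - 1]$ is empty, so no arc crosses the split). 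So the real content is the monotonicity statement $\gamma(j_1, j_2, i_1, i_2) \leq \gamma(j_1 + 1, j_2, i_1, i_2)$.

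For this I would argue directly from the definition. Set $k = \gamma(j_1, j_2, i_1, i_2)$, so $P[j_1, k] \sqsubseteq Q[i_1, i_2]$ and $k$ induces an \aass\ of $P[j_1, j_2]$. I want to show $\gamma(j_1 + 1, j_2, i_1, i_2) \geq k$, i.e., that $\max(k, j_1)$ — more carefully, that there is a witness index $\geq k$ that is valid for the quadruple $(j_1+1, j_2, i_1, i_2)$. Consider the candidate $k' = \max(k, j_1)$. We need: (a) $P[j_1 + 1, k'] \sqsubseteq Q[i_1, i_2]$, and (b) $k'$ induces an \aass\ of $P[j_1+1, j_2]$. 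For (a): if $k \geq j_1$ then $P[j_1+1, k]$ is a suffix of $P[j_1, k]$, and deleting the base $P[j_1]$ (together with any incident arc) from an embedding of $P[j_1,k]$ in $Q[i_1,i_2]$ gives an embedding of $P[j_1+1,k]$ — here one uses that $k$ inducing an \aass\ of $P[j_1, j_2]$, combined with $k \ge j_1$, means $P[j_1,k]$ is itself a legitimate arc-annotated substring whose arc set splits correctly, so removing its first base is clean; if $k = j_1 - 1$ then $k' = j_1$ and we need $P[j_1+1, j_1] = \epsilon \sqsubseteq Q[i_1,i_2]$, which is trivial. For (b): any index inducing an \aass\ of $P[j_1, j_2]$ that is $\geq j_1$ also induces an \aass\ of the substring $P[j_1+1, j_2]$, since the arc set of $P[j_1+1,j_2]$ is a subset of that of $P[j_1, j_2]$, so no new crossing arcs can appear; and $j_1 - 1 \le j_1$ trivially induces an \aass\ of $P[j_1+1,j_2]$ when $k' = j_1$. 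Hence $k'$ is a valid witness and $\gamma(j_1+1, j_2, i_1, i_2) \geq k' \geq k$.

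The step I expect to require the most care is verifying (a) in the case $k \geq j_1$ — namely that an embedding of $P[j_1, k]$ in $Q[i_1, i_2]$ restricts to an embedding of $P[j_1+1, k]$ in $Q[i_1, i_2]$. The subtlety is the arc match condition: I must confirm that the arc set of $P[j_1+1, k]$ is exactly the arc set of $P[j_1, k]$ minus any arc incident to position $j_1$, and that dropping $f(j_1)$ from the image does not violate the biconditional arc match condition on the remaining bases. This is where one should lean on the structural setup rather than re-deriving it: the restriction argument is essentially the same mechanism used in the proof of the Splitting Lemma (i), where an embedding of a substring is cut into embeddings of sub-substrings, so I would phrase (a) as an application of that same reasoning (splitting $P[j_1, k]$ at index $j_1$, which induces an \aass\ of it because $k$ induces an \aass\ of the larger $P[j_1, j_2]$ and $j_1 \le k \le j_2$). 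Once that restriction is justified, the rest is bookkeeping on the edge cases $k = j_1 - 1$.
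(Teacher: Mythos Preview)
Your argument is correct and follows essentially the same idea as the paper's proof: the monotonicity $\gamma(j_1,j_2,i_1,i_2)\le\gamma(j_1+1,j_2,i_1,i_2)$ comes from restricting an embedding of $P[j_1,k]$ to one of $P[j_1+1,k]$, and the outer bounds are immediate from the definition. The paper compresses this into two sentences (``adding another base in front \ldots\ cannot increase the endpoint''), whereas you spell out the \aass\ condition and the $k=j_1-1$ edge case explicitly; your extra care around step~(a) is sound but not strictly needed, since restricting an embedding to any subinterval automatically preserves the base, order, and arc-match conditions on that subinterval.
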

\begin{proof}
Adding another base in front of the substring $P[j_1 + 1, j_2]$ cannot increase the endpoint of an embedding of $P[j_1 + 1, j_2]$ in $Q$ and therefore $\gamma(j_1, j_2, i_1, i_2) \leq \gamma(j_1 + 1, j_2, i_1, i_2)$. Furthermore, for any substring $P[j_1, j_2]$ we can embed at most $j_2 - j_1$ bases and at least $0$ bases in $Q$ implying the remaining inequalities.
\end{proof}
Let $i_1, i_2$ be indices in $Q$ such that $i_1 \leq i_2$ and consider the sequence
\begin{equation*}
\Gamma(i_1, i_2) = \gamma(m, m, i_1, i_2), \ldots, \gamma(1,m, i_1, i_2) = \gamma_m, \ldots, \gamma_1
\end{equation*}
By Lemma~\ref{lem:consecutive} we have that $\gamma_m, \ldots, \gamma_1$ is a non-increasing and non-negative sequence where $\gamma_m$ is either $m$ or $m-1$. We encode the sequence efficiently using two bit strings $V$ and $U$ defined as follows. The string $V$ is formed by the concatenation of $m$ bit strings $s_m, \ldots, s_1$, that is, $V  = s_m \cdot s_{m-1} \cdots s_1$, where $\cdot$ denotes concatenation.  The string $s_m$ is the single bit $s_m = m - \gamma_m$ and $s_k$, $1 \leq k < m$, is given by 
\begin{equation*}
s_k = \begin{cases}
    0 & \text{if $\gamma_{k+1} - \gamma_{k} = 0$} \\
   \underbrace{1 \cdots 1}_{\gamma_{k+1} - \gamma_{k} \text{ times}} & \text{if $\gamma_{k+1} - \gamma_{k}  > 0$}
  \end{cases}
\end{equation*}
Let $D_k$ denote the sum of bits in string $s_m \cdots s_k$. We have that $m - D_m = m - s_m = \gamma_m$ and inductively $m - D_k = \gamma_k$. The string $U$ is the bit string of length $|V|$ consisting of a $1$ in each position where a substring in $V$ ends. Given $V$ and $U$ we can therefore uniquely recover $\gamma_m, \ldots, \gamma_1$. Since $\gamma_m, \ldots, \gamma_1$ can decrease by at most $m+1$ the total number of $1$s in $V$ is at most $m+1$. The total number of $0$s is at most $m$ and therefore $|V| \leq 2m+1$. Hence, our representation uses $O(m)$ bits. We can compress $\gamma_m, \ldots, \gamma_1$ into $V$ and $U$ in  a single scan in $O(m)$ time. Reversing the process we can also decompress in $O(m)$ time. Hence, we have the following result. 
\begin{lemma}\label{lem:compression}
We represent any $\Gamma$ sequence using $O(m)$ bits. Compression and decompression takes $O(m)$ time. 
\end{lemma}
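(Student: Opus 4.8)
The plan is to turn the two–string encoding $(V,U)$ sketched above into a precise construction and to verify three properties: that $(V,U)$ uniquely determines $\gamma_m,\dots,\gamma_1$ (and hence the whole $\Gamma$ sequence), that $V$ and $U$ together occupy $O(m)$ bits, and that both encoding and decoding run in $O(m)$ time by a single linear scan.

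First I would record the structural fact the scheme exploits: by Lemma~\ref{lem:consecutive}, $\gamma_m,\dots,\gamma_1$ is non-increasing and lies in $\{0,\dots,m\}$, with $\gamma_m\in\{m-1,m\}$. So the sequence is completely described by its first value together with the list of non-negative consecutive drops $\gamma_{k+1}-\gamma_k$, and $V=s_m\cdots s_1$ stores exactly this in unary: $s_m$ is the single bit $m-\gamma_m$, and each $s_k$ with $k<m$ is either the bit $0$ (zero drop) or a block of $\gamma_{k+1}-\gamma_k$ ones. Writing $D_k$ for the number of ones in the prefix $s_m\cdots s_k$, a one-line induction gives $\gamma_k=m-D_k$; hence if the decoder can recover the block boundaries it recovers every $\gamma_k$. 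Those boundaries are precisely what $U$ supplies, carrying a $1$ in each position of $V$ where some $s_k$ ends. Given $V$ and $U$, the decoder walks left to right, using the $1$s of $U$ to cut $V$ into $s_m,s_{m-1},\dots,s_1$ while maintaining the running count $D_k$; this reconstructs $\gamma_m,\dots,\gamma_1$ uniquely. Encoding is the mirror image: one pass over $\gamma_m,\dots,\gamma_1$ emits $s_m$ and then, for each $k$, either a $0$ or a run of $1$s of the right length, with the matching marker bits in $U$.

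For the size bound I would count ones and zeros of $V$ separately. The number of ones is $s_m+\sum_{k=1}^{m-1}(\gamma_{k+1}-\gamma_k)$, and the sum telescopes to $\gamma_m-\gamma_1\le m$, so $V$ has at most $m+1$ ones. Each $s_k$ with $k<m$ that is not a run of ones contributes a single $0$, and $s_m$ contributes at most one more, so $V$ has at most $m$ zeros. Hence $|V|\le 2m+1$ and $|U|=|V|\le 2m+1$, so the representation uses $O(m)$ bits; and since there are exactly $m$ blocks and $|V|=O(m)$, both scans above are $O(m)$ time.

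The only genuinely delicate point is unique decodability. A maximal run of $1$s in $V$ may be the concatenation of several consecutive one-blocks $s_{k'}\cdots s_k$, and from $V$ alone there is no way to see where one such block ends and the next begins — this is exactly why the auxiliary marker string $U$ is part of the representation. (Zero-blocks cause no difficulty, since a $0$ in $V$ always terminates a block.) Once $U$ is available, the parsing, the identity $\gamma_k=m-D_k$, and the two linear scans are all routine, and the lemma follows.
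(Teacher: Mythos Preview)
Your proposal is correct and follows essentially the same argument as the paper: the same $(V,U)$ encoding, the same identity $\gamma_k = m - D_k$, the same telescoping count giving $|V|\le 2m+1$, and the same single-scan compression/decompression. Your added remark on why $U$ is indispensable for parsing runs of $1$s is a nice clarification but does not depart from the paper's approach.
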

We modify our algorithm from Section~\ref{sec:algorithm} to take advantage of Lemma~\ref{lem:compression}. Let $(i_l, i_r)$ be an internal arc in $A_Q$. Immediately before a recursive call to a light child $(i^k_l, i^k_r)$ of $(i_l, i_r)$ we compress the at most 3 $\Gamma$ sequences maintained at $(i_l, i_r)$, namely $\Gamma(i^h_l, i^h_r)$, where $(i^h_l, i^h_r)$ is the heavy child, $\Gamma(i^k_r + 1, i_r)$, and  $\Gamma(i^k_r + 1, i_r - 1)$. Immediately after returning from the recursive call we decompress the sequences again. 

The total number of compressions and decompressions is $O(n)$. Hence, by Lemma~\ref{lem:compression} the additional time used is $O(nm)$ and therefore the total running time of the algorithm remains $O(nm)$. The space for storing the $O(\log |A_Q|)$ $\Gamma$ sequences becomes $O(m \log |A_Q|) =  O(m \log n)$ \emph{bits}. Hence, the total space is $O(n +m)$. In conclusion, we have shown Theorem~\ref{thm:main}.

\subsection{Avoiding Decompression}				
The above algorithm requires $O(n)$ decompressions. We briefly describe how one can avoid these decompressions by augmenting the representation of $\Gamma$ sequences slightly. A \emph{rank/select index} for a bit string $B$ supports the operations $\rank(B, k)$ that returns the number of $1$s in $B[1,k]$ and $\select(B, k)$ that returns the position of the $k$th $1$ in $B$. We can construct a rank/select index in $O(|B|)$ time that uses $o(|B|)$ bits and supports both operations in constant time~\cite{Munro1996}. We add a rank/select index to the bit strings $V$ and $U$ in our compressed representation. Since these use $o(m)$ bits this does not affect the space complexity. Let $\gamma_m, \ldots, \gamma_1$ be a $\Gamma$ sequence compressed into bit strings $V$ and $U$ augmented with a rank/select index. For any $k$, $1\leq k\leq m$ we can compute the element $\gamma_k$ in constant time as
\begin{equation*}
m - \rank(V, \select(U, m+1-k))
\end{equation*}
To see the correctness, first note that $\select(U,m+1-k)$ is end position of the $m+1-k$th substring in $V$. Therefore, $\rank(V, \select(U, m+1-k))$ is the sum of the bits in the first $m+1-k$ substrings of $V$. This is $D_k$ and since $\gamma_k = m -D_k$ the computation returns $\gamma_k$. In summary, we have the following result.
\begin{lemma}
We can represent any $\Gamma$ sequence in $O(m)$ bits while allowing constant time access to any element. 
\end{lemma}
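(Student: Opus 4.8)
The plan is to take the compressed representation guaranteed by Lemma~\ref{lem:compression} and augment it with a rank/select index so that individual entries can be read off without decompressing the whole sequence. Recall that a $\Gamma$ sequence $\gamma_m,\ldots,\gamma_1$ is stored as two bit strings $V = s_m\cdots s_1$ and $U$, where $U$ has a $1$ exactly at each position where some block $s_k$ ends, and where $D_k$, the number of $1$s in the prefix $s_m\cdots s_k$, satisfies $\gamma_k = m - D_k$. Every block $s_k$ is nonempty: for $k < m$ it is either the single bit $0$ (when $\gamma_{k+1}-\gamma_k = 0$) or a run of $\gamma_{k+1}-\gamma_k \geq 1$ ones, and $s_m$ is the single bit $m-\gamma_m$. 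Hence $U$ contains exactly $m$ ones and $|U| = |V| = O(m)$.

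First I would add to both $V$ and $U$ the rank/select data structure of~\cite{Munro1996}, which is built in $O(|B|) = O(m)$ time, occupies $o(|B|) = o(m)$ bits, and answers $\rank$ and $\select$ queries in constant time. Added to the $O(m)$-bit representation of Lemma~\ref{lem:compression}, the total space remains $O(m)$ bits.

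Next I would verify the access formula $\gamma_k = m - \rank(V, \select(U, m+1-k))$ for every $k$ with $1 \leq k \leq m$. The blocks occur in $V$ in the order $s_m, s_{m-1}, \ldots, s_1$, so the $(m+1-k)$th block from the left is exactly $s_k$; since each block is nonempty, $\select(U, m+1-k)$ returns the last position of $s_k$ in $V$. Therefore $\rank(V, \select(U, m+1-k))$ is the number of $1$s in $s_m\cdots s_k$, which is precisely $D_k$, and the formula returns $m - D_k = \gamma_k$. Each query is a constant number of $\rank$/$\select$ calls, so access is $O(1)$ time, which together with the space bound above gives the lemma.

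The only delicate point is the off-by-one bookkeeping relating the index $k$ into the $\Gamma$ sequence to the ordinal of the corresponding block counted from the left of $V$, together with the boundary cases $k = m$ (where $D_m = s_m = m-\gamma_m$) and $k = 1$; the observation that every block is nonempty is exactly what guarantees that $\select$ lands on a genuine block boundary in all these cases. Beyond this, the argument is immediate from the structure of $V$ and $U$ already established for Lemma~\ref{lem:compression}.
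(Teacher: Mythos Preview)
Your proposal is correct and follows essentially the same approach as the paper: augment $V$ and $U$ with $o(m)$-bit rank/select indices and recover $\gamma_k$ via $m-\rank(V,\select(U,m+1-k))$. Your explicit observation that every block $s_k$ is nonempty (hence $U$ has exactly $m$ ones and $\select(U,m+1-k)$ lands on the end of $s_k$) makes the off-by-one bookkeeping slightly more transparent than in the paper, but the argument is otherwise identical.
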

The algorithm now only needs to compress $\Gamma$ sequences once. Whenever, we need an element of a compressed $\Gamma$ sequence we extract it in constant time as above. Hence, the asymptotic complexity of the algorithm remains the same.

\bibliographystyle{abbrv}
\bibliography{paper}

\end{document}